\documentclass[letterpaper, 10 pt, conference]{ieeeconf}  

\IEEEoverridecommandlockouts                              

\usepackage{cite}
\usepackage{amsmath,amssymb,amsfonts}
\usepackage{graphicx} 
\usepackage{amsmath}
\usepackage{amssymb}  
\usepackage{mathrsfs}
\usepackage{multicol}
\usepackage{xcolor}
\usepackage{float}
\usepackage{subcaption}
\usepackage{mathtools}
\usepackage{soul}
\usepackage{bm}
\usepackage[T1]{fontenc}

\newtheorem{problem}{Problem}

\newtheorem{corollary}{Corollary}
\newtheorem{theorem}{Theorem}

\usepackage{tikz}
\usepackage{xcolor}
\usepackage{dsfont}
\usepackage{algorithm}
\usepackage{algpseudocode}
\usepackage{comment}
\usepackage{hyperref} 

\let\labelindent\relax
\usepackage[shortlabels]{enumitem}

\DeclareMathAlphabet\mathbfcal{OMS}{cmsy}{b}{n}

\definecolor{darkspringgreen}{rgb}{0.09, 0.45, 0.27}

\def\BibTeX{{\rm B\kern-.05em{\sc i\kern-.025em b}\kern-.08em
    T\kern-.1667em\lower.7ex\hbox{E}\kern-.125emX}}

\def\1{\mathds{1}}

\title{\LARGE \bf
Stability-Preserving Online Adaptation of Neural Closed-loop Maps
}

\author{Danilo Saccani, Luca Furieri, Giancarlo Ferrari-Trecate 
\thanks{This work was supported as a part of NCCR Automation, a National Centre of Competence in Research, funded by the Swiss National Science Foundation (grant number 51NF40\_225155), the Swiss National Science Foundation Ambizione (grant  number PZ00P2\_208951) and the NECON project (grant number 200021-219431).}
\thanks{D. Saccani and G. Ferrari-Trecate are with the Institute of Mechanical Engineering, Ecole Polytechnique Fédérale de Lausanne (EPFL), CH-1015 Lausanne, Switzerland. (email: \texttt{\{danilo.saccani, giancarlo.ferraritrecate\}@epfl.ch).} L. Furieri is with the Department of Engineering Science, University of Oxford, United Kingdom (\texttt{luca.furieri@eng.ox.ac.uk});}%
}

\begin{document}

\maketitle
\thispagestyle{empty}
\pagestyle{empty}
\begin{abstract}
The growing complexity of modern control tasks calls for controllers that can react online as objectives and disturbances change, while preserving closed-loop stability. Recent approaches for improving the performance of nonlinear systems while preserving closed-loop stability rely on time-invariant recurrent neural-network controllers, but offer no principled way to update the controller during operation. Most importantly, switching from one stabilizing policy to another can itself destabilize the closed-loop. We address this problem by introducing a stability-preserving update mechanism for nonlinear, neural-network-based controllers. Each controller is modeled as a causal operator with bounded $\ell_p$-gain, and we derive gain-based conditions under which the controller may be updated online. These conditions yield two practical update schemes, time-scheduled and state-triggered, that guarantee the closed-loop remains $\ell_p$-stable after any number of updates. Our analysis further shows that stability is decoupled from controller optimality, allowing approximate or early-stopped controller synthesis. We demonstrate the approach on nonlinear systems with time-varying objectives and disturbances, and show consistent performance improvements over static and naive online baselines while guaranteeing stability.
\end{abstract}

\section{Introduction}

Modern control systems operate in increasingly dynamic settings, enabled by advances in sensing, onboard computation, and model-based design. While closed-loop stability is a core requirement, alone it is not enough, as applications also demand good tracking, efficiency, and constraint handling under time-varying operating conditions. Nonlinear Optimal Control (NOC) provides a natural framework to improve performance by minimizing a task-dependent cost~\cite{bertsekas1995dynamic}. 
However, enforcing stability while optimizing a general nonlinear cost is difficult, especially for complex and time-varying dynamics~\cite{rawlings2017model,saccani2023model}. Classical approaches such as dynamic programming and the maximum principle~\cite{bertsekas1995dynamic} face significant computational limitations, and existing stability guarantees typically require restrictive assumptions on the cost, which limits applicability in practice.

Nonlinear Model Predictive Control (NMPC) offers an approximate solution to the NOC problem by implementing an implicit policy obtained from a finite-horizon problem solved at each time step in receding-horizon fashion~\cite{rawlings2017model}. While NMPC provides a structured way to handle constraints and nonlinearities, it typically imposes strict conditions on the cost to guarantee stability, which makes it difficult to apply in scenarios with rapidly changing objectives~\cite{grune2017nonlinear}. 

Neural networks (NNs) can parametrize rich families of high-performance feedback policies~\cite{cybenko1989approximation}, but certifying closed-loop stability remains a nontrivial task. Indeed, generic NN architectures do not lend themselves to the utilization of Lyapunov or small-gain conditions~\cite{sontag1993neural}.

Recent works utilize System Level Synthesis (SLS)~\cite{ho2020system,furierineuralSLS}, Internal Model Control (IMC)~\cite{saccani2024optimal,furieriPerformance}, and Youla parametrizations~\cite{wang2022youla,barbara2023learning,galimberti2024parametrizations} to improve nonlinear closed-loop performance while guaranteeing $\ell_p$-stability. The key idea is to search over a space of stable $\mathcal{L}_p$ operators used to parametrize the controller. In practice, these operators are implemented as time-invariant neural parametrizations whose stability is certified for any choice of weights. The main limitation is that the controller is fixed after training: the weights are not updated online, so significant offline training is required to cover changing operating conditions.

A key obstacle to online adaptation is that replacing one stabilizing policy with another may destabilize the transient closed loop, even when both policies are stabilizing in isolation, as known from switched-systems theory~\cite{liberzon2003switching,morse1997control,de2003switched}. 


In this work, building on the neural IMC architecture of~\cite{furieriPerformance}, we propose gain-budgeted triggering mechanisms, in both time-scheduled and state-triggered form, that certify when a newly optimized controller may safely replace the current one.

The main contributions of this work are: (i) a gain-budgeted update condition, derived from finite-gain and small-gain arguments, that guarantees the closed-loop remains $\ell_p$-stable under repeated online controller updates; (ii) two implementable triggering mechanisms, one time-based and one state-based, that enforce this condition while trading off computational effort against adaptivity; and (iii) evidence on two nonlinear benchmarks of consistent improvements over offline baselines and receding-horizon open-loop controllers.

\section*{Notation}
We denote by $\mathbb{N}$ the set of non-negative integers and $\mathbb{N}_{\geq a}$ the integers $\{a,a+1,\dots \}$.
The set of all sequences $\mathbf{v}=(v_0,v_1,v_2,\dots)$, where $v_t\in\mathbb{R}^n$ for all $t\in \mathbb{N}$, is denoted as $\ell^n$. Moreover, $\mathbf{v}$ belongs to $\ell^n_p\subset \ell^n$ with $p\in[1,\infty]$ if $\| \mathbf{v} \|_p=\left( \sum_{t=0}^{\infty} |v_t|^p \right)^{\frac{1}{p}}<\infty$, where $|\cdot |$ denotes an arbitrarily chosen vector norm. We say that $ \mathbf{v}\in\ell^n_\infty$ if $\sup_t\|v_t\|<\infty$. We refer to $v_{0:T}$ to denote the truncation of $\mathbf{v}$ with $t$ ranging from $0$ to $T$. An operator $\mathbf{A}:\ell^n \rightarrow \ell^m$ is said to be \textit{causal} if $\mathbf{A}(\mathbf{x})=(A_0(x_0), A_1(x_{0:1}),\dots,A_t(x_{0:t}),\dots)$. If $A_t(x_{0:t})=A_t(0,x_{0:t-1}),$ then $\mathbf{A}$ is said to be \textit{strictly} causal. 
An operator $\mathbf{A}:\ell^n\rightarrow \ell^m$ is $\ell_p$-stable if it is causal and $\mathbf{A}(\mathbf{a})\in \ell^m_p$ for all $\mathbf{a}\in \ell^n_p$. Equivalently, we write $\mathbf{A}\in\mathcal{L}_p$. We say that an $\mathcal{L}_p$ operator $\mathbf{A}:\mathbf{w}\mapsto\mathbf{u}$ has a finite $\mathcal{L}_p$-gain $\gamma_p(\mathbf{A})>0$ if $\|\mathbf{u}\|_p\leq\gamma_p(\mathbf{A})\|\mathbf{w}\|_p$, for all $\mathbf{w}\in\ell_p^n$. 
When $p$ is clear from the context, we simply write the gain as 
$\gamma(\mathbf{A}) > 0$ and the norm as $\|\bm{v}\|$.

\section{Preliminaries}
Let us consider the following discrete-time, time-varying nonlinear system:
\begin{equation}\label{eq:nonlinsys}
    x_t = f_{t-1} (x_{t-1},u_{t-1})+w_t, \quad t=1,2,\dots,
\end{equation}
where \( x_t \in \mathbb{R}^n \) is the state vector, \( u_t \in \mathbb{R}^m \) is the control input, and \( w_t \in \mathbb{R}^n \) represents an unknown process noise embedding the initial value of the system state by setting \( w_0 = x_0 \).
Rewriting system~\eqref{eq:nonlinsys} in operator form, we obtain:
\begin{equation} \label{eq:operatorForm}
    \mathbf{x} =\mathbf{F}(\mathbf{x},\mathbf{u})+\mathbf{w},
\end{equation}
where \( \mathbf{F}: \ell^n \times \ell^m \rightarrow \ell^n \) is a strictly causal operator defined as
$
\mathbf{F}(\mathbf{x},\mathbf{u}) = \left( 0, f_0(x_0,u_0),\dots,f_{t-1}(x_{t-1},u_{t-1}), \dots \right)
$.
We model the system as the input-to-state operator
\begin{equation}\label{eq:operatorFormIO}
\mathbfcal{F}:(\mathbf{u},\mathbf{w}) \mapsto \mathbf{x}\in \mathcal{L}_p ,
\end{equation}
and assume $\mathbfcal{F}$ is causal and admits a finite $\mathcal{L}_p$-gain $\gamma(\mathbfcal{F})<\infty$. This is satisfied when the plant is stable, or when $f_t$ represents the dynamics of a system in closed-loop with a stabilizing controller $K'$\footnote{The latter case is the one considered in the experiments in  Section~\ref{sec: numerical}.}.
Additionally, we assume that \( \mathbf{w} \in \ell_p \) and that the process noise \( w_t \) follows a distribution \( \mathcal{D} \). 
To regulate system~\eqref{eq:nonlinsys}, we consider a nonlinear, time-varying state-feedback controller of the form:
\begin{equation}\label{eq:policy}
    \mathbf{u}=\mathbf{K}(\mathbf{x})=(K_0(x_0),K_1(x_{0:1}),\dots,K_t(x_{0:t}),\dots)\,.
\end{equation}
where \( \mathbf{K}: \ell^n \rightarrow \ell^m \) is a causal operator to be designed.
Due to causality, each disturbance sequence \( \mathbf{w} \in \ell^n \) induces unique trajectories in the closed-loop system~\eqref{eq:nonlinsys}–\eqref{eq:policy} (see e.g.~\cite{furieriPerformance}). 
For a given system \( \mathbf{F} \) and controller \( \mathbf{K} \), we denote by \( \mathbf{\Phi}^x[\mathbf{F},\mathbf{K}] \) and \( \mathbf{\Phi}^u[\mathbf{F},\mathbf{K}] \) the induced closed-loop operators mapping \( \mathbf{w} \mapsto \mathbf{x} \) and \( \mathbf{w} \mapsto \mathbf{u} \), respectively. Thus,
\begin{equation*}
\mathbf{x} = \mathbf{\Phi^x[F,K](w)}, \quad
\mathbf{u} = \mathbf{\Phi^u[F,K](w)}, \quad \forall \mathbf{w} \in \ell^n.
\end{equation*}
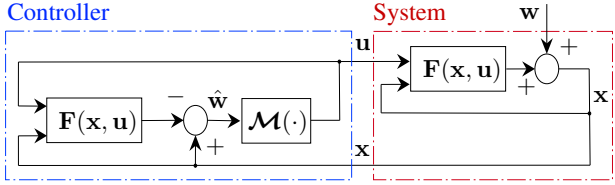
\begin{figure}
    \centering    
    \resizebox{0.95\columnwidth}{!}{%

\begin{tikzpicture}[x=0.75pt,y=0.75pt,yscale=-0.8,xscale=0.64]
\draw   (89,82) -- (168.8,82) -- (168.8,112) -- (89,112) -- cycle ;
\draw    (168.67,97.17) -- (201,97.01) ;
\draw [shift={(204,97)}, rotate = 179.73] [fill={rgb, 255:red, 0; green, 0; blue, 0 }  ][line width=0.08]  [draw opacity=0] (10.72,-5.15) -- (0,0) -- (10.72,5.15) -- (7.12,0) -- cycle    ;
\draw   (204,97) .. controls (204,91.48) and (208.48,87) .. (214,87) .. controls (219.52,87) and (224,91.48) .. (224,97) .. controls (224,102.52) and (219.52,107) .. (214,107) .. controls (208.48,107) and (204,102.52) .. (204,97) -- cycle ;
\draw    (213.9,128.57) -- (213.99,110) ;
\draw [shift={(214,107)}, rotate = 90.27] [fill={rgb, 255:red, 0; green, 0; blue, 0 }  ][line width=0.08]  [draw opacity=0] (10.72,-5.15) -- (0,0) -- (10.72,5.15) -- (7.12,0) -- cycle    ;
\draw    (65.13,87.35) -- (86,87.48) ;
\draw [shift={(89,87.5)}, rotate = 180.36] [fill={rgb, 255:red, 0; green, 0; blue, 0 }  ][line width=0.08]  [draw opacity=0] (10.72,-5.15) -- (0,0) -- (10.72,5.15) -- (7.12,0) -- cycle    ;
\draw    (65.17,107.47) -- (86.2,107.41) ;
\draw [shift={(89.2,107.4)}, rotate = 179.84] [fill={rgb, 255:red, 0; green, 0; blue, 0 }  ][line width=0.08]  [draw opacity=0] (10.72,-5.15) -- (0,0) -- (10.72,5.15) -- (7.12,0) -- cycle    ;
\draw    (65.2,127.4) -- (545.6,127.2) ;
\draw    (65.2,57.8) -- (65.13,87.35) ;
\draw    (65.2,57.8) -- (391,57.4) ;
\draw [shift={(394,57.4)}, rotate = 179.93] [fill={rgb, 255:red, 0; green, 0; blue, 0 }  ][line width=0.08]  [draw opacity=0] (10.72,-5.15) -- (0,0) -- (10.72,5.15) -- (7.12,0) -- cycle    ;
\draw    (65.17,107.47) -- (65.2,127.4) ;
\draw  [color={rgb, 255:red, 0; green, 59; blue, 255 }  ,draw opacity=1 ][dash pattern={on 7.5pt off 1.5pt on 0.75pt off 1.5pt}] (54.4,37) -- (345.2,37) -- (345.2,137) -- (54.4,137) -- cycle ;
\draw  [color={rgb, 255:red, 208; green, 2; blue, 27 }  ,draw opacity=1 ][dash pattern={on 7.5pt off 1.5pt on 0.75pt off 1.5pt}] (363.6,37.2) -- (564.4,37.2) -- (564.4,137) -- (363.6,137) -- cycle ;
\draw   (394.8,47.4) -- (474.6,47.4) -- (474.6,77.4) -- (394.8,77.4) -- cycle ;
\draw    (369.97,72.87) -- (391,72.81) ;
\draw [shift={(394,72.8)}, rotate = 179.84] [fill={rgb, 255:red, 0; green, 0; blue, 0 }  ][line width=0.08]  [draw opacity=0] (10.72,-5.15) -- (0,0) -- (10.72,5.15) -- (7.12,0) -- cycle    ;
\draw    (369.97,72.87) -- (370,92) ;
\draw   (253,82.37) -- (310.8,82.37) -- (310.8,112.37) -- (253,112.37) -- cycle ;
\draw    (475.17,62.47) -- (496.2,62.41) ;
\draw [shift={(499.2,62.4)}, rotate = 179.84] [fill={rgb, 255:red, 0; green, 0; blue, 0 }  ][line width=0.08]  [draw opacity=0] (10.72,-5.15) -- (0,0) -- (10.72,5.15) -- (7.12,0) -- cycle    ;
\draw   (499.2,62.4) .. controls (499.2,56.88) and (503.68,52.4) .. (509.2,52.4) .. controls (514.72,52.4) and (519.2,56.88) .. (519.2,62.4) .. controls (519.2,67.92) and (514.72,72.4) .. (509.2,72.4) .. controls (503.68,72.4) and (499.2,67.92) .. (499.2,62.4) -- cycle ;
\draw    (509.2,18.8) -- (509.2,49.4) ;
\draw [shift={(509.2,52.4)}, rotate = 270] [fill={rgb, 255:red, 0; green, 0; blue, 0 }  ][line width=0.08]  [draw opacity=0] (10.72,-5.15) -- (0,0) -- (10.72,5.15) -- (7.12,0) -- cycle    ;
\draw    (224,97) -- (248.6,97.36) ;
\draw [shift={(251.6,97.4)}, rotate = 180.83] [fill={rgb, 255:red, 0; green, 0; blue, 0 }  ][line width=0.08]  [draw opacity=0] (10.72,-5.15) -- (0,0) -- (10.72,5.15) -- (7.12,0) -- cycle    ;
\draw    (310.8,97) -- (335.2,97) ;
\draw    (335.2,97) -- (334.8,57.4) ;
\draw    (519.2,62.4) -- (544.8,62.4) ;
\draw    (544.8,62.4) -- (545.6,127.2) ;
\draw    (370,92) -- (544.8,92.4) ;
\draw  [fill={rgb, 255:red, 0; green, 0; blue, 0 }  ,fill opacity=1 ] (333.8,57.54) .. controls (333.8,56.96) and (334.29,56.5) .. (334.9,56.5) .. controls (335.51,56.5) and (336,56.96) .. (336,57.54) .. controls (336,58.11) and (335.51,58.58) .. (334.9,58.58) .. controls (334.29,58.58) and (333.8,58.11) .. (333.8,57.54) -- cycle ;
\draw  [fill={rgb, 255:red, 0; green, 0; blue, 0 }  ,fill opacity=1 ] (212.8,127.54) .. controls (212.8,126.96) and (213.29,126.5) .. (213.9,126.5) .. controls (214.51,126.5) and (215,126.96) .. (215,127.54) .. controls (215,128.11) and (214.51,128.57) .. (213.9,128.57) .. controls (213.29,128.57) and (212.8,128.11) .. (212.8,127.54) -- cycle ;
\draw  [fill={rgb, 255:red, 0; green, 0; blue, 0 }  ,fill opacity=1 ] (543.95,92.4) .. controls (543.95,91.83) and (544.44,91.36) .. (545.05,91.36) .. controls (545.66,91.36) and (546.15,91.83) .. (546.15,92.4) .. controls (546.15,92.97) and (545.66,93.44) .. (545.05,93.44) .. controls (544.44,93.44) and (543.95,92.97) .. (543.95,92.4) -- cycle ;
\draw (97,89) node [anchor=north west][inner sep=0.75pt]  [font=\normalsize]  {$\mathbf{F( x,u)}$};
\draw (346,112) node [anchor=north west][inner sep=0.75pt]    {$\mathbf{x}$};
\draw (187.5,74.07) node [anchor=north west][inner sep=0.75pt]    {$-$};
\draw (220.57,107.6) node [anchor=north west][inner sep=0.75pt]    {$+$};
\draw (53.93,16.93) node [anchor=north west][inner sep=0.75pt]   [align=left] {\textcolor[rgb]{0,0.16,1}{{\fontfamily{ptm}\selectfont Controller}}};
\draw (346,41) node [anchor=north west][inner sep=0.75pt]    {$\mathbf{u}$};
\draw (403,54) node [anchor=north west][inner sep=0.75pt]  [font=\normalsize]  {$\mathbf{F( x ,u)}$};
\draw (256,89) node [anchor=north west][inner sep=0.75pt]  [font=\normalsize]  {$\mathbfcal{M}( \cdot )$};
\draw (484,20) node [anchor=north west][inner sep=0.75pt]    {$\mathbf{w}$};
\draw (546.5,76) node [anchor=north west][inner sep=0.75pt]    {$\mathbf{x}$};
\draw (482,67) node [anchor=north west][inner sep=0.75pt]    {$+$};
\draw (517.98,40.63) node [anchor=north west][inner sep=0.75pt]    {$+$};
\draw (222,75) node [anchor=north west][inner sep=0.75pt]    {$\hat{\mathbf{w}}$};
\draw (361.4,16.93) node [anchor=north west][inner sep=0.75pt]   [align=left] {{\fontfamily{ptm}\selectfont \textcolor[rgb]{0.78,0,0}{System}}};
\end{tikzpicture}
}
    \caption{IMC architecture parametrizing all stability-preserving controllers via a free operator $\mathbfcal{M}\in\mathcal{L}_p$.}
    \label{fig:SLSScheme}
\end{figure}
Our objective is to design a policy $\mathbf{K}(\mathbf{x})$ addressing the following problem:
\begin{problem} \label{prb:perfBoosting}
    Design $\mathbf{K}$ by solving the infinite-horizon Nonlinear Optimal Control (NOC) problem:
    \begin{subequations} \label{eq:NOC}
    \begin{align}
        &\min_{\mathbf{K}(\cdot)}  && \mathbb{E}_{\mathbf{w}} \left[ l(\mathbf{x},\mathbf{u}) \right] \label{seq:cost}\\
        & \text{s.t.} && x_t=f_{t-1}(x_{t-1},u_{t-1})+w_t, \ \ \ w_0=x_0, \label{seq:dynP1}\\
        & &&u_t=K_t(x_{0:t}), \ \ \forall t=0,1,\dots, \label{seq:controllerP1}\\
        & && (\mathbf{\Phi^x}[\mathbf{F, K}],\mathbf{\Phi^u}[\mathbf{F, K}])\in\mathcal{L}_p\,,\label{seq:constrStab}
    \end{align}
    \end{subequations}
where $l:\ell^n\times \ell^m\rightarrow \mathbb{R}$ is any piecewise differentiable loss satisfying $l(\mathbf{x},\mathbf{u}) \geq 0$. This loss can be defined through a (possibly time-varying) stage cost, e.g., $l(\mathbf{x},\mathbf{u}) = \sum_{t=0}^\infty c_t(x_t,u_t)$, and is meant to quantify infinite-horizon closed-loop performance.
\hfill{}$\square$
\end{problem}

The expectation $\mathbb{E}_{\mathbf{w}}[\cdot]$ captures the impact of disturbances $\mathbf{w}$ on the loss, while~\eqref{seq:constrStab} imposes a hard $\ell_p$-stability constraint on the closed-loop system.

It has been shown in~\cite{furierineuralSLS,furieriPerformance} that the NOC problem admits an equivalent formulation by exploiting an IMC control architecture based on an operator $\mathbfcal{M}:\ell^n\rightarrow \ell^m$ to be designed. Specifically, all and only stability-preserving controllers, i.e., controllers verifying~\eqref{seq:constrStab}, are parametrized by $\mathbfcal{M}\in\mathcal{L}_p$.
The control architecture proposed in~\cite{furierineuralSLS,furieriPerformance} is shown in Figure~\ref{fig:SLSScheme} and incorporates a model of the system dynamics to estimate the disturbance $\mathbf{w}$. The following theorem summarizes these results:
\begin{theorem}{(adapted from Theorem 1 in~\cite{furieriPerformance})} \label{thm:neurSLS}
    Assume that the operator $\mathbfcal{F}$ is $\ell_p$-stable, and consider the evolution of~\eqref{eq:operatorForm} where $\mathbf{u}$ is defined as
    \begin{equation} \label{eq:SLSinput}
        \mathbf{u}=\mathbfcal{M}(\mathbf{x-F(x,u)})\,,
    \end{equation}
    for a causal operator $\mathbfcal{M}:\ell^n\rightarrow \ell^m$. Let $\mathbf{K}$ be the operator for which $\mathbf{u=K(x)}$ is equivalent to~\eqref{eq:SLSinput}. The following two statements hold true.
    \begin{enumerate}[label=(\roman*)]
        \item If $\mathbfcal{M}\in\mathcal{L}_p$, then the closed-loop system is $\ell_p$-stable.
        \item If there is a causal policy $\mathbf{C}$ such that $\mathbf{\Phi^x}[\mathbf{F},\mathbf{C}]$, $\mathbf{\Phi^u}[\mathbf{F},\mathbf{C}]\in\mathcal{L}_p$, then
        \begin{equation*}
            \mathbfcal{M}=\mathbf{\Phi^u}[\mathbf{F},\mathbf{C}],
        \end{equation*}
        gives $\mathbf{K}=\mathbf{C}$.
    \end{enumerate}
\end{theorem}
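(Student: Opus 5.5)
The plan is to exploit the fact that, in closed loop, the IMC architecture of Figure~\ref{fig:SLSScheme} reconstructs the disturbance exactly. Along any closed-loop trajectory of~\eqref{eq:operatorForm}, we have $\mathbf{x}=\mathbf{F}(\mathbf{x},\mathbf{u})+\mathbf{w}$, so the signal $\hat{\mathbf{w}}=\mathbf{x}-\mathbf{F}(\mathbf{x},\mathbf{u})$ fed to $\mathbfcal{M}$ coincides with $\mathbf{w}$. First I will check that the loop is well posed. Because $\mathbf{F}$ is strictly causal and $\mathbfcal{M}$ is causal, the equations can be solved recursively in $t$: $x_0=w_0$, $u_0=\mathcal{M}_0(w_0)$, and $x_t=f_{t-1}(x_{t-1},u_{t-1})+w_t$, $u_t=\mathcal{M}_t(\hat w_{0:t})$. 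This gives a unique trajectory, and the policy $\mathbf{K}$ is well defined and causal.

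For (i), the identity $\hat{\mathbf{w}}=\mathbf{w}$ gives $\mathbf{u}=\mathbfcal{M}(\mathbf{w})$, hence $\mathbf{\Phi}^u[\mathbf{F},\mathbf{K}]=\mathbfcal{M}\in\mathcal{L}_p$. Then $\mathbf{x}=\mathbfcal{F}(\mathbf{u},\mathbf{w})=\mathbfcal{F}(\mathbfcal{M}(\mathbf{w}),\mathbf{w})$. Since $\mathbfcal{F}$ is $\ell_p$-stable, $\mathbf{x}\in\ell_p$ whenever $\mathbf{w}\in\ell_p$, because the pair $(\mathbfcal{M}(\mathbf{w}),\mathbf{w})\in\ell_p$. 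If finite gains are available, this also yields the bound $\|\mathbf{x}\|\le\gamma(\mathbfcal{F})(\gamma(\mathbfcal{M})+1)\|\mathbf{w}\|$. Causality of both closed-loop maps follows from causality of the composition. Hence $\mathbf{\Phi}^x[\mathbf{F},\mathbf{K}],\mathbf{\Phi}^u[\mathbf{F},\mathbf{K}]\in\mathcal{L}_p$.

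For (ii), set $\mathbfcal{M}=\mathbf{\Phi}^u[\mathbf{F},\mathbf{C}]$, which lies in $\mathcal{L}_p$ by assumption. Fix $\mathbf{w}$ and let $(\mathbf{x}^C,\mathbf{u}^C)$ be the trajectory under $\mathbf{C}$. By the argument above, the IMC loop produces $\mathbf{u}=\mathbf{\Phi}^u[\mathbf{F},\mathbf{C}](\mathbf{w})=\mathbf{u}^C$. Uniqueness of solutions of $\mathbf{x}=\mathbf{F}(\mathbf{x},\mathbf{u}^C)+\mathbf{w}$ then gives $\mathbf{x}=\mathbf{x}^C$. So both loops generate identical trajectories for every $\mathbf{w}$.

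To conclude $\mathbf{K}=\mathbf{C}$ as operators on $\ell^n$, and not merely along trajectories, I will use the fact that every $\mathbf{x}\in\ell^n$ is realized by a unique disturbance. Given any $\mathbf{x}$, define $\mathbf{w}$ recursively by $w_t=x_t-f_{t-1}(x_{t-1},C_{t-1}(x_{0:t-1}))$. Under $\mathbf{C}$ this $\mathbf{w}$ produces exactly $\mathbf{x}$, and by the previous step it also produces $\mathbf{x}$ under $\mathbf{K}$, with $K_t(x_{0:t})=C_t(x_{0:t})$. I expect this step to be the main obstacle. It requires showing that the map $\mathbf{w}\mapsto\mathbf{x}$ is a bijection (which holds by the strict causality of $\mathbf{F}$), and that $\mathbf{K}$ is defined on all of $\ell^n$ through the same implicit relation. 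Without this, the equality would only hold on reachable trajectories.
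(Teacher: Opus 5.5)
Your proof is correct and follows the standard argument of \cite{furieriPerformance}, which this paper cites without reproducing: in closed loop the IMC block reconstructs $\hat{\mathbf{w}}=\mathbf{w}$, so $\mathbf{\Phi}^u[\mathbf{F},\mathbf{K}]=\mathbfcal{M}$ and $\mathbf{\Phi}^x[\mathbf{F},\mathbf{K}]=\mathbfcal{F}(\mathbfcal{M}(\cdot),\cdot)$, and choosing $\mathbfcal{M}=\mathbf{\Phi}^u[\mathbf{F},\mathbf{C}]$ makes the two loops generate identical trajectories. The step you flagged as the main obstacle also goes through: for an arbitrary $\mathbf{x}$, the disturbance $\mathbf{w}=\mathbf{x}-\mathbf{F}(\mathbf{x},\mathbf{C}(\mathbf{x}))$ reproduces $\mathbf{x}$ under $\mathbf{C}$, and induction on the recursion defining $K_t$ gives $K_t(x_{0:t})=C_t(x_{0:t})$ on all of $\ell^n$.
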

Theorem~\ref{thm:neurSLS} implies that exploring the space of operators $\mathbfcal{M} \in \mathcal{L}_p$ is sufficient to identify \emph{all and only stability-preserving} policies, i.e. all policies satisfying~\eqref{seq:constrStab}.
This suggests that solving Problem~\ref{prb:perfBoosting} is equivalent to replacing $\mathbf{u}=\mathbf{K}(\mathbf{x})$ in~\eqref{seq:controllerP1} with~\eqref{eq:SLSinput} and $\min_{\mathbf{K}(\cdot)}$ in~\eqref{seq:cost} with $\min_{\mathbfcal{M}\in\mathcal{L}_p}$. However, solving Problem~\ref{prb:perfBoosting} through numerical methods requires some simplifications.
First, searching within the infinite-dimensional space $\mathcal{L}_p$ is infeasible. This suggests considering subsets $\mathbfcal{O}\subset \mathcal{L}_p$ of operators $\mathbfcal{M}(\theta)$ depending on finitely many parameters $\theta\in\mathbb{R}^d$.
Various methods for defining different sets $\mathbfcal{O}$ 
are provided in~\cite{revay2023recurrent, bonassi2024structured}.
With the structures proposed in these papers, the operators $\mathbfcal{M}$ can be modelled using time-invariant nonlinear dynamical systems embedding NNs.
The formulations in~\cite{revay2023recurrent,massai2025free,zakwan2024neural} also allow imposing an explicit upper bound on the operator's $\ell_2$-gain, a feature we will exploit in Section~\ref{sec:SwitchingTheorem} for control design. We highlight that the resulting operators can be written as $\mathbfcal{M}(\theta,\bm{w})=(\mathcal{M}_0(\theta,w_0),\dots,\mathcal{M}_t(\theta,w_{0:t}),\dots)$, where $\mathcal{M}_t$ is obtained by unrolling a time-invariant dynamical system.
In combination with this parametrization, to make the NOC problem~\eqref{eq:NOC} tractable, \cite{furieriPerformance} proposes the following finite-dimensional optimization problem:
\begin{subequations} \label{eq:unconstrNOC}
    \begin{align}
        &\min_{\theta\in\mathbb{R}^d}  && \frac{1}{S} \sum^S_{s=1} \tilde{l}(x^s_{0:T}, u^s_{0:T})\label{seq:cost_unconstrNOC} \\
        & \text{s.t.} && x_t^s=f_{t-1}(x^s_{t-1},u^s_{t-1})+w^s_t, \ \ \ w^s_0=x^s_0, \nonumber\\
        & &&u^s_t=\mathcal{M}_t(\theta,w_{0:t}^s), \ \ \forall t=0,\dots,T.
    \end{align}
    \end{subequations}
In~\eqref{eq:unconstrNOC}, compared to~\eqref{seq:cost} the expectation is replaced with an empirical average over $S$ samples and the horizon is truncated at $T$. 
Crucially, \emph{stability is decoupled from optimality}; indeed, from part (i) of Theorem~\ref{thm:neurSLS}, closed-loop $\ell_p$ stability is guaranteed even when~\eqref{eq:unconstrNOC} provides an approximated local minimizer.

Although effective, this approach suffers from key limitations. First, no matter how large $T$ is chosen, it may still be insufficient for time-varying costs. Second, the existing operator parametrizations leverage time-invariant dynamics, associated to a single value of the parameters, significantly limiting the ability of the closed-loop behavior to adapt to changing costs and environmental conditions.

To address these points, in this paper, we investigate the following question: under which conditions can we safely replace the current controller with a newly updated controller $\mathbfcal{M}^{(i)}\in \mathbfcal{O}$, $i=1,2,\dots$ without compromising stability?  Answering this question enables performing online optimization to continuously update control policies.

\section{Stability-Preserving Controller Updates via triggering conditions} \label{sec:SwitchingTheorem}
\noindent
We propose to design a new class of time-varying controllers obtained by concatenating, over time, operators $\mathbfcal{M}(\theta)\in \mathbfcal{O}\subseteq \mathcal{L}_p$, where each $\mathbfcal{M}(\theta)$ is time-invariant and parametrized by a finite-dimensional parameter vector $\theta\in\mathbb{R}^d$. Our goal is to derive conditions under which the resulting switched closed loop remains $\ell_p$-stable.

We denote with $ t_i \in \mathbb{N} $, $ i = 1,2,\ldots $ the (possibly infinitely many) discrete time instants at which the operator $\mathbfcal{M}(\theta)$ is updated. 
Given an update schedule $\{t_i\}_{i \ge 0}$ and corresponding operators $\{\mathbfcal{M}^{(i)}\}_{i \ge 0}$, we define the resulting time-varying controller as $\mathbf{u}=\tilde{\mathbfcal{M}}(\mathbf{x}, \mathbf{w})$, where for any $t \in [t_i, t_{i+1})$ the composite operator $\tilde{\mathbfcal{M}}$ is defined as
\[
(\tilde{\mathbfcal{M}}(\mathbf{x}, \mathbf{w}))_t
\;:=\;
\mathcal{M}^{(i)}_{t-t_i}(\theta^{(i)},\mathbf{z}^{(i)}_{0:t-t_i}),
\]
where $\mathbf{z}^{(i)} := (x_{t_i}, w_{t_i+1}, w_{t_i+2}, \dots)$ and the operator $\mathbfcal{M}^{(i)}$ is (re)initialized with a zero internal state at $t_i$. This generalizes the $w_0=x_0$ convention: the first input is the state at $t_i$, followed by disturbances. We call $\tilde{\mathbfcal{M}}$ the \emph{concatenation} of the $\mathbfcal{M}^{(i)}$'s. 
At each update step, a new control law
, to be used over the interval $t_i:t_{i+1}-1$, is obtained by solving the optimization problem:
\begin{align} \label{eq:unconstrNOCRH}
    &\min_{\theta^{(i)}\in\mathbb{R}^d}  && \frac{1}{S} \sum^S_{s=1} \tilde{l}(x^s_{t_i:t_i+H}, u^s_{t_i:t_i+H}) \\
    & \text{s.t.} && x^s_{t_i} = x_{t_i}, \ \ z^s_{t_i} = x_{t_i},\nonumber \\
    & && x_{t_i+k}^s=f_{t_i+k-1}(x^s_{t_i+k-1},u^s_{t_i+k-1})+w^s_{t_i+k}, \nonumber\\
    & && z_{t_i+k}^s=w^s_{t_i+k}, \ \  k = 1,\dots,H, \ s=1,\dots,S, \nonumber\\
    & && u^s_{t_i+k}=\mathcal{M}_k^{(i)}(\theta^{(i)},z_{t_i:t_i+k}^s), \ \ k = 0,\dots,H, \nonumber \\ &  && \qquad \qquad \qquad \qquad\qquad \qquad \ \ s=1,\dots, S, \nonumber
\end{align}
where the cost $\tilde{l}$ is the truncation to horizon $H \in \mathbb{N}$ of the infinite-horizon cost $l(\mathbf{x},\mathbf{u})$.
Although solving problem~\eqref{eq:unconstrNOCRH} at any given time $t_i$ yields an operator $\mathbfcal{M}^{(i)}(\theta^{(i)})\in\mathbfcal{O}\subseteq\mathcal{L}_p$, which would guarantee $\ell_p$-stability if applied from $t_i$ onward, repeatedly solving~\eqref{eq:unconstrNOCRH} and varying the controller over time results in a \textit{switched closed loop}. It is well known that switching between individually stabilizing controllers can itself destabilize the closed loop~\cite{liberzon1999basic, lin2009stability, heemels2012introduction}, which necessitates a principled update mechanism.
To make online updating safe, we introduce a structured update mechanism. The idea is to allow a new operator $\mathbfcal{M}^{(i)}$ to take effect only when a prescribed triggering condition is satisfied. 
As desired, this will ensure that the closed loop induced by the concatenated controller $\tilde{\mathbfcal{M}}$ remains $\ell_p$-stable, even though the controller is updated over time through the sub-operators $\mathbfcal{M}^{(i)}(\theta)\in\mathbfcal{O}$.
We consider triggering instants $t_i \in \mathbb{N}$ defined by
\begin{equation} \label{eq:triggering_time}
    t_0=0, \quad t_{i+1}=t_{i}+\mu(x_{t_i}), \quad i\in\mathbb{N}_{\geq 0},
\end{equation}
where $\mu: \mathbb{R}^n \to \mathbb{N}_{\geq 1}$ is a triggering function to be designed for ensuring closed-loop stability. Our main result is as follows.
\begin{theorem} (Stable update of the controller)\label{thm:update}
    For a given $\bar{\gamma}$, let $ \mathbfcal{M}^{(i)} \in \mathbfcal{O} $ be the control operators updated at the triggering instants $ t_i $ given by~\eqref{eq:triggering_time}, and verifying $ \gamma(\mathbfcal{M}^{(i)}) \leq \bar{\gamma}, \  \forall i \geq 0$. If the $\mathcal{L}_p$ gains $\gamma(\mathbfcal{M}^{(i)})$ satisfy
    \begin{equation}\label{eq:bound_M_eps}
        \gamma(\mathbfcal{F}) \cdot (\gamma(\mathbfcal{M}^{(i)}) + 1) \epsilon^{(i)} \leq r^{(i)}, \quad \forall i \geq 1,
    \end{equation}
    for some non-negative scalar sequence $\bm{r}= \{r^{(i)}\}_{i=1}^\infty \in \ell_p $ and thresholds $ \epsilon^{(i)}>0 $, and if the controller update is triggered by 
    \begin{equation}
    \label{eq:switching_condition}
        |x_{t_i} | \leq \epsilon^{(i)},
    \end{equation}
    then, for every disturbance sequence $\mathbf w\in\ell_p$, the closed-loop trajectories generated by the concatenated controller $\tilde{\mathbfcal M}$ satisfy $\mathbf x\in\ell_p$ and $\mathbf u\in\ell_p$. In particular, the resulting time-varying closed loop is $\ell_p$-stable.
\end{theorem}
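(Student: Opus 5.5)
We split the time axis into the update intervals $\mathcal{I}_i := [t_i, t_{i+1})$ and bound the trajectory on each interval separately, using only the finite gains of $\mathbfcal{F}$ and $\mathbfcal{M}^{(i)}$. On $\mathcal{I}_i$ the plant is driven by the restarted controller $\mathbfcal{M}^{(i)}$, initialized with zero internal state. Its input is $\mathbf{z}^{(i)} = (x_{t_i}, w_{t_i+1}, w_{t_i+2},\dots)$. By the IMC structure of Theorem~\ref{thm:neurSLS}, on this interval the closed loop behaves like the system restarted at time $t_i$ with "initial disturbance" $x_{t_i}$. This uses the convention $w_0=x_0$ shifted to $t_i$, and relies on $\mathbfcal{F}$ having a (time-shift-uniform) gain $\gamma(\mathbfcal{F})$. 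Since $\mathbf{u}=\mathbfcal{M}^{(i)}(\mathbf{z}^{(i)})$ there, we obtain
\begin{equation*}
\|u_{\mathcal{I}_i}\|_p \le \gamma(\mathbfcal{M}^{(i)})\,\|z^{(i)}_{\mathcal{I}_i}\|_p, \qquad \|x_{\mathcal{I}_i}\|_p \le \gamma(\mathbfcal{F})\big(\|u_{\mathcal{I}_i}\|_p + \|z^{(i)}_{\mathcal{I}_i}\|_p\big).
\end{equation*}
Combining these gives $\|x_{\mathcal{I}_i}\|_p \le \gamma(\mathbfcal{F})(\gamma(\mathbfcal{M}^{(i)})+1)\big(|x_{t_i}| + \|w_{\mathcal{I}_i\setminus\{t_i\}}\|_p\big)$. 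Causality ensures that truncations to the interval are controlled by the infinite-horizon gains.

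Next we use the triggering condition~\eqref{eq:switching_condition} together with~\eqref{eq:bound_M_eps}. For $i\ge1$, $\gamma(\mathbfcal{F})(\gamma(\mathbfcal{M}^{(i)})+1)|x_{t_i}| \le r^{(i)}$. The remaining term is bounded by $c\,\|w_{\mathcal{I}_i}\|_p$, where $c:=\gamma(\mathbfcal{F})(\bar\gamma+1)$ is uniform in $i$ thanks to $\gamma(\mathbfcal{M}^{(i)})\le\bar\gamma$. The interval $i=0$ is handled directly by Theorem~\ref{thm:neurSLS}(i), with $x_0=w_0$. Hence $\|x_{\mathcal{I}_i}\|_p \le r^{(i)} + c\|w_{\mathcal{I}_i}\|_p$ for all $i$, with $r^{(0)}:=0$.

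We then sum over intervals. Since the $\mathcal{I}_i$ partition $\mathbb{N}$ (if there are finitely many updates, the last interval is infinite and the same gain bound applies), we have $\|\mathbf{x}\|_p^p = \sum_i \|x_{\mathcal{I}_i}\|_p^p \le \sum_i (r^{(i)} + c\|w_{\mathcal{I}_i}\|_p)^p$. By Minkowski's inequality in $\ell_p$ over the index $i$, this yields $\|\mathbf{x}\|_p \le \|\bm r\|_p + c\|\mathbf{w}\|_p < \infty$. For $p=\infty$ the same argument works with suprema. For $\mathbf{u}$, we write $\|u_{\mathcal{I}_i}\|_p \le \bar\gamma(|x_{t_i}| + \|w_{\mathcal{I}_i}\|_p)$. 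Here $|x_{t_i}| \le r^{(i)}/\gamma(\mathbfcal{F})$, using $\gamma(\mathbfcal{M}^{(i)})+1\ge1$ and $\gamma(\mathbfcal{F})>0$. Summing again gives $\mathbf{u}\in\ell_p$.

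The main obstacle is the first step: justifying rigorously that the restarted closed loop on $\mathcal{I}_i$ is exactly the IMC loop of Theorem~\ref{thm:neurSLS}, driven by $\mathbf{z}^{(i)}$ with uniform gains. This requires the disturbance estimate $\hat{\mathbf{w}} = \mathbf{x}-\mathbf{F}(\mathbf{x},\mathbf{u})$ to coincide with $\mathbf{z}^{(i)}$ after the reset. It also requires $\gamma(\mathbfcal{F})$ to bound the plant response started from an arbitrary time $t_i$ with initial state $x_{t_i}$ treated as an input. If $\mathbfcal{F}$ is time-varying, this calls for interpreting $\gamma(\mathbfcal{F})$ as a gain uniform over initial times, and I would state this interpretation explicitly. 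A secondary subtlety is that the triggering instants depend on the state through $\mu$. This is harmless, because every estimate above is pathwise and interval-wise.
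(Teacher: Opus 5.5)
Your proposal follows the paper's proof essentially step by step: window-wise gain bounds with the state $x_{t_i}$ treated as an impulse at the restart, the trigger and budget conditions giving $\|x_{\mathcal I_i}\|_p\le r^{(i)}+c\|w_{\mathcal I_i}\|_p$, then summation over the disjoint windows. Your direct Minkowski step $\|\mathbf x\|_p\le\|\mathbf r\|_p+c\|\mathbf w\|_p$ and your explicit bound on $\mathbf u$ are slightly cleaner than the paper's $2^{p-1}$ estimate and its ``same arguments'' remark. One small fix: the window $i=0$ needs the quantitative bound $\|x_{\mathcal I_0}\|_p\le\gamma(\mathbfcal{F})(\gamma(\mathbfcal{M}^{(0)})+1)\|w_{\mathcal I_0}\|_p$, which comes from your own first-step computation with $\mathbf z^{(0)}=\mathbf w$, not from the purely qualitative Theorem~\ref{thm:neurSLS}(i).
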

The proof is reported in Appendix~\ref{app:thm1}.

Theorem~\ref{thm:update} highlights a trade-off between the update condition, which bounds the state at switching times, and the intensity of the control action, governed by the gain of $\mathbfcal{M}^{(i)}$.
Indeed,~\eqref{eq:bound_M_eps} yields
\[
\epsilon^{(i)} \;\le\; \frac{r^{(i)}}{\gamma_p(\mathbfcal F)\,(\gamma_p(\mathbfcal M^{(i)})+1)}\,,
\]
and, therefore, the looser the update triggers (larger $\epsilon^{(i)}$), the gentler the controllers (smaller $\gamma_p(\mathbfcal M^{(i)})$). Conversely, larger controller gains amplify both state and input on each inter-update window.
Note that~\eqref{eq:switching_condition} is an admissibility condition: an update at time $t_i$ is only allowed if the condition is met. We do not require $t_i$ to be the first time this condition becomes true; any $t_i$ satisfying~\eqref{eq:switching_condition} is permissible. In the next section, we show how to select $\{r^{(i)}\}_{i\geq1}$, $\{\epsilon^{(i)}\}_{i\geq1}$, and the triggering mechanism \eqref{eq:triggering_time} so as to manage this trade-off.

\section{Implementation}

In this section, we derive practical update rules from Theorem~\ref{thm:update}. An update is admissible if the state at switching is sufficiently small and the gain of the new operator is compatible with that state through~\eqref{eq:bound_M_eps}--\eqref{eq:switching_condition}. The key design quantities are the thresholds $\epsilon^{(i)}$ and the sequence $\mathbf r=\{r^{(i)}\}$: larger thresholds make updates easier to trigger but force smaller gains, whereas smaller thresholds allow more aggressive controllers but delay the update.

We consider a time-scheduled scheme and a state-triggered scheme. Both require an upper bound $\hat{\gamma}(\mathbfcal F)$ on the gain of the stable or pre-stabilized plant; conservatism only makes updates more cautious. In the $\ell_2$ case, such a bound can be obtained from a storage inequality under the convention $w_0=x_0$. Since stability is enforced by the gain condition rather than by optimality, problem~\eqref{eq:unconstrNOCRH} may be solved only approximately, for instance using shorter horizons, fewer samples, or budgeted warm-started gradient steps. A practical discussion on choosing $\mathbf r$ can be found in Appendix~\ref{app:implementation}.

\subsection{Algorithm 1: Predefined Update Times}

We first consider a time-scheduled update mechanism, where $\mu$ in~\eqref{eq:triggering_time} is assumed to be constant, in which the designer selects (i)~an update period $t_{\mathrm{opt}}\in\mathbb{N}_{\ge1}$ and (ii)~a nonnegative budget sequence $\{r^{(i)}\}_{i\ge1}\in\ell_p$.
Updates are only attempted every $t_{\mathrm{opt}}$ steps:
\[
t_0 = 0, \qquad
t_{i+1} = t_i + t_{\mathrm{opt}}, \quad i \in \mathbb{N}_{\geq1}.
\]
At each scheduled time $t_i$, we (a) measure the current state and set $\epsilon^{(i)} := |x_{t_i}|$, and (b) check if an update is admissible under Theorem~\ref{thm:update}.
Rearranging \eqref{eq:bound_M_eps} yields an upper bound on the allowable gain of the new controller:
\begin{equation}\label{eq:Mi_gain_bound_algo1}
0 < \gamma_p(\mathbfcal{M}^{(i)}) \;\le\;
\frac{r^{(i)}}{\gamma_p(\mathbfcal{F}) \, \epsilon^{(i)}} - 1.
\end{equation}
We also require $\gamma_p(\mathbfcal{M}^{(i)}) \le \bar{\gamma}$, where $\bar{\gamma}$ is the uniform gain bound assumed in Theorem~\ref{thm:update}.
If at $t_i$ there is $ \gamma_p(\mathbfcal{M}^{(i)})$ satisfying~\eqref{eq:Mi_gain_bound_algo1}, we solve~\eqref{eq:unconstrNOCRH} (with that gain bound enforced) to obtain new parameters $\theta^{(i)}$ for $\mathbfcal{M}^{(i)}$, and we apply this controller on $[t_i,\dots,t_{i+1}{-}1]$. If it is infeasible (e.g., because $|x_{t_i}|$ is too large or $r^{(i)}$ is too small), we keep the previous controller for the next interval, $\mathbfcal{M}^{(i)}=\mathbfcal{M}^{(i-1)}$, which preserves $\ell_p$-stability.  
Whenever there is $ \gamma_p(\mathbfcal{M}^{(i)})$ satisfying~\eqref{eq:Mi_gain_bound_algo1}, Theorem~\ref{thm:update} guarantees that we can update the controller again and the resulting time-varying closed-loop remains $\ell_p$-stable.

In summary, $t_{\mathrm{opt}}\in\mathbb{N}_{\geq 1}$ is freely chosen (with no minimum dwell-time constraint) and determines how often we attempt an update and $r^{(i)}$ sets how aggressive the $i$-th update is allowed to be: for a given state norm $|x_{t_i}|$, larger $r^{(i)}$ allows larger controller gain and thus potentially faster transients.

\subsection{Algorithm 2: State-Dependent Triggering}

The second scheme removes the fixed update period. Instead, the designer specifies (i) an admissible gain level $\gamma_p(\mathbfcal{M})$ that upper-bounds the $\mathcal{L}_p$-gain of any controller to be used, and (ii) a nonnegative sequence $\{r^{(i)}\}_{i\ge 1} \in \ell_p$. 

Given $\gamma_p(\mathbfcal{M})$ and $r^{(i)}$, we define the admissible state threshold for the $i$-th update as
\[
\epsilon^{(i)} \;=\; \frac{r^{(i)}}{\gamma_p(\mathbfcal{F}) \big(\gamma_p(\mathbfcal{M}) + 1\big)},
\]
which is exactly the feasibility condition of Theorem~\ref{thm:update}.
We then set $t_i$ to be the first time after $t_{i-1}$ such that
\[
|x_{t_i}| \;\le\; \epsilon^{(i)}.
\]
At that instant $t_i$, we solve~\eqref{eq:unconstrNOCRH} to obtain a new controller $\mathbfcal{M}^{(i)}$ with $\gamma_p(\mathbfcal{M}^{(i)}) \le \gamma_p(\mathbfcal{M})$, and keep it active until the next trigger time $t_{i+1}$.
This ``self-triggered'' rule inverts Algorithm 1. Rather than updating on a fixed schedule and then checking if the state is small enough, we wait until the state is small enough to admit an update of the operator $\mathbfcal{M}$.
This can reduce unnecessary recomputations when the state is still large.
The only design degrees of freedom are therefore the gain bound $\gamma_p(\mathbfcal{M})$ and the budget sequence $\{r^{(i)}\}$, which together determine when updates occur. Again, enforcing the gain cap $\gamma_p(\mathbfcal{M}^{(i)}) \le \gamma_p(\mathbfcal{M})$ and trigger condition implies $\ell_p$-stability by Theorem~\ref{thm:update}.

\section{Numerical Example}
\label{sec: numerical}
In this section, we demonstrate the application of the proposed approach to robotic tasks. 
First, we apply our solution to the \textit{mountains} problem from~\cite{furieriPerformance,onken2021neural} to highlight the advantages of our method.
Then, we present a tracking problem in a time-varying environment with dynamic obstacles. 

We consider a point-mass system of mass $m\in\mathbb{R}_+$ subject to nonlinear friction forces:
\begin{equation} \label{eq:example}
    \begin{bmatrix}
    q_{t+1} \\
    v_{t+1}
\end{bmatrix} =
\begin{bmatrix}
    q_t + T_s v_t \\
    v_t + T_s m^{-1} \bigl( F_t + G(v_t) \bigr)
\end{bmatrix} + w_t,
\end{equation}
where $q_t \in \mathbb{R}^2$ and $v_t \in \mathbb{R}^2$ represent the position and velocity of the system, respectively, while $F_t \in \mathbb{R}^2$ is the control input. The sampling time is denoted as $T_s = 0.05$ seconds. The term $w_t \in \mathbb{R}^4$ represents process noise. The nonlinear damping term is defined as:
$
G(v_t) = -b_1 v_t + b_2 \tanh(v_t),
$
where $0 < b_2 < b_1$. 
To parameterize the operators $\mathbfcal{M}^{(i)} \in \mathbfcal{O}$, we employed a Recurrent Equilibrium Network (REN)~\cite{revay2023recurrent} with prescribed gain. Problem~\eqref{eq:unconstrNOCRH} is solved using backpropagation through time to compute gradients (see~\cite{furieriPerformance}), and the Adam optimizer to update the parameters via stochastic gradient descent.
The approach is implemented using PyTorch, and the code to reproduce the examples is available at: \url{https://github.com/DecodEPFL/Online_neurSLS.git}.
\begin{figure*}[]
    \centering
    \begin{subfigure}{0.39\textwidth}
        \centering \includegraphics[width=\textwidth]{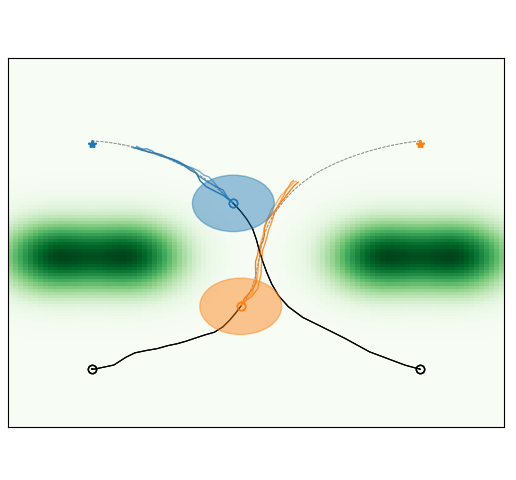}
    \end{subfigure}
    \begin{subfigure}{0.48\textwidth}
        \centering
        \includegraphics[width=\textwidth]{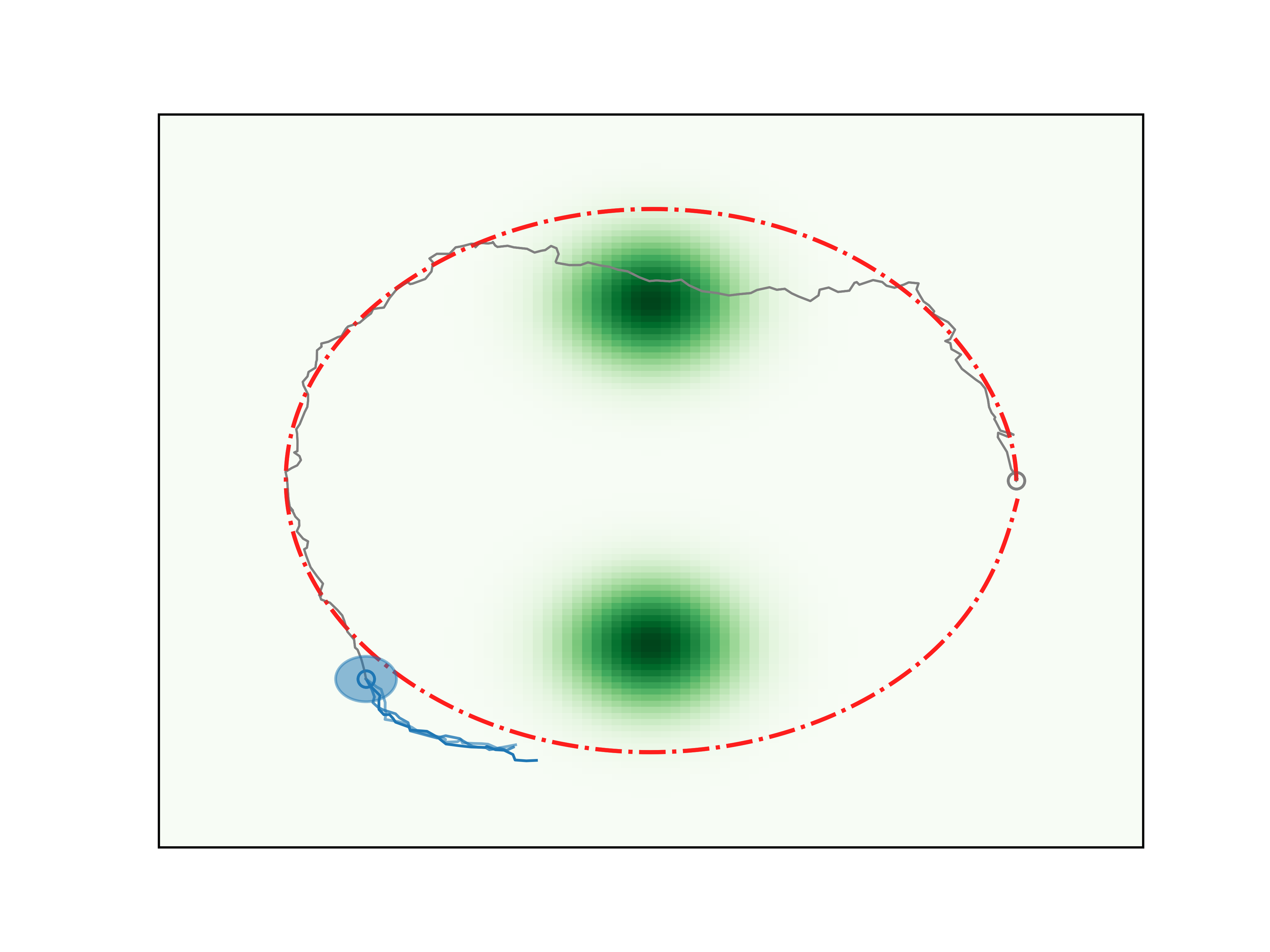}
    \end{subfigure}
    \caption{Qualitative closed-loop behavior of Algorithm~1 in the two case studies. (Left) Mountains problem at $\tau = 0.8\,\mathrm{s}$. Green: obstacles. Colored lines: predicted trajectories over $[\tau,\tau+1.25]$; black: executed trajectory over $[0,\tau)$; gray dashed: predicted continuation after $\tau$. Colored disks show agent positions and radii. (Right) Dynamic-obstacles problem at $\tau_1 = 9.3\,\mathrm{s}$. Gray: executed trajectory; red dash-dot: reference; blue: predicted trajectory over $[\tau_1,\tau_1+0.5]$; green: obstacle positions at $\tau_1$.}
    \label{fig:trackingCL}
\end{figure*}
\subsection{Mountains problem}
The \textit{mountains} scenario depicted in Figure~\ref{fig:trackingCL} features two agents aiming to collaboratively navigate through a narrow valley. Each agent $j\in\{ 1,2\}$ evolves according to~\eqref{eq:example} and has a target position $\bar{q}_j \in \mathbb{R}^2$ that must be reached with zero velocity (i.e., $\bar{v}_j = 0_2$). We stack both agents' states, so the overall state is $x_t=[q_{t,1},v_{t,1},q_{t,2},v_{t,2}]^\top\in\mathbb{R}^8$ and $\bar{x}$ stacks the target positions and zero velocities.
We study the regulation problem in error coordinates with respect to the desired equilibrium $\bar{q}_j,\bar{v}_j$. We apply a baseline linear feedback $K' = \operatorname{diag}(k_1, k_2)$ with gains $k_1, k_2 > 0$, around the equilibrium, and we set $F_{t,j}=K'(\bar q_j -q_{t,j})+u_{t,j}$, where $u_{t,j}$ is an auxiliary input. We treat $u_t=[u_{t,1},u_{t,2}]^\top$ as the control signal to be optimized. Under this pre-stabilization, the closed-loop error dynamics can be written in the form of~\eqref{eq:nonlinsys}, with state $e_t=x_t-\bar{x}\in\mathbb{R}^8$, input $u_{t}=[u_{t,1}, u_{t,2}]^\top\in\mathbb{R}^4$, and disturbance $w_t\in\mathbb{R}^8$. The associated input-to-state operator $\mathbfcal{F}_e(\bm{u},\bm{w})\mapsto\bm{e}$ has finite $\ell_2$-gain. 
An upper bound on this gain is obtained from a discrete-time bounded-real LMI~\cite{de1992discrete}; see Appendix~\ref{app:ell2gain}.
In this example, we set $p=2$, and we model the disturbance sequence $\mathbf{w}$ as Gaussian noise with exponential decay.
The loss function is defined as:
\begin{align}
    \tilde l(e_{0:T},u_{0:T}) &= \sum^{T}_{t=0} \big(l_{\text{traj}}(e_t, u_t) + l_{\text{ca}}(e_t) + l_{\text{obs}}(e_t)\big), \nonumber
\end{align}
where $l_{\text{traj}}(e, u) = [e^\top \ u^\top] Q [e^\top \ u^\top]^\top$ with $Q \succeq 0$ penalizing deviation from the targets and control effort. The terms $l_{\text{ca}}(e)$ and $l_{\text{obs}}(e)$ are barrier functions that prevent collisions between agents and obstacles (see~\cite{furieriPerformance} for their definitions). We compare with the approach proposed in~\cite{furieriPerformance}, where the controller is parametrized by an operator $\mathbfcal{M}\in\mathcal{L}_2$ obtained offline over a horizon $T=100$, matching the total simulation time used for our online controller. Figure~\ref{fig:trackingCL} (left) illustrates closed-loop trajectories obtained using Algorithm 1, which attempts an update to a new operator $\mathbfcal{M}^{(i)}$ at fixed intervals of $t_{\text{opt}} = 2$ time steps. Problem~\eqref{eq:unconstrNOCRH} is solved online to update the operator with a horizon of $H = 25$ and $S = 3$ disturbance realizations. Two scenarios are simulated: the nominal scenario replicating the example in~\cite{furieriPerformance}, and a perturbed scenario involving abrupt impulse disturbances. Specifically, the local disturbance $w_{t,j}$ is replaced by: 
\begin{equation} \label{eq:delta}
    w_{t,j} + 0.3\delta(t-1)+0.3\delta(t-8), \quad j=1,2,
\end{equation}
where $\delta$ is the Kronecker delta. 
Over 50 runs, our method reduces the average cost by 35.1\% (nominal) and 40.3\% (perturbed) versus~\cite{furieriPerformance}, indicating improved adaptation to abrupt, previously unseen disturbance events; boxplots are reported in Appendix~\ref{app:example}.

\subsection{Dynamic obstacles}
In the second scenario (Figure~\ref{fig:trackingCL}), an agent tracks a circular path while avoiding dynamic obstacles. 
We track a reference trajectory $x_{\text{ref},t}$ 
by applying a baseline stabilizing feedback $K'$ together with a feedforward term $u_{\text{ref},t}$. We define the tracking error $e_t=x_t-x_{\text{ref},t}\in\mathbb{R}^4$ and write the total input as $ F_t = u_{\text{ref},t} - K' e_t + u_{t}$, where $u_t$ is an auxiliary control signal that we optimize online. Under this pre-stabilization, the closed-loop error dynamics for $e_t$ can be written as a nonlinear system $e_{t} = f_e(e_{t-1}, u_{t-1}) + w_{t}$, which matches the general form of~\eqref{eq:nonlinsys} with state $e_t$, input $u_t$, and disturbance $w_t$.
As in the previous example, the gain of the associated input-to-state operator $\mathbfcal{F}_e(\bm{u},\bm{w})\mapsto\bm{e}$ is computed upper-bounding the worst-case realization of the nonlinearity.
Two dynamic obstacles move on the $y$-axis following a perturbed sinusoidal trajectory:
\begin{align} \label{eq:obsmotion}
& y^{\text{obs}}_{t,j} = A \sin((2\pi \psi_j + \eta_j)t + \phi_j) + y^{\text{obs}}_{0,j}, \quad j=1,2.
\end{align}
where $\psi_j$ is the nominal frequency, $\eta_j \sim \mathcal{N}(0,0.01)$ is a random frequency perturbation, and $\phi_j$ is a random phase.
The finite-horizon objective $\tilde l$ in \eqref{eq:unconstrNOCRH} is now time-varying because the obstacles move. We define
\begin{align}
    \tilde l(e_{0:T},u_{0:T}) &= \sum^{T}_{t=0} \big( l_{\text{traj}}(e_t, u_t) + l_{\text{obs}}(e_t, t)\big). \nonumber
\end{align} 
where $l_{\text{traj}}$ is a positive semidefinite quadratic penalty on tracking error and control effort, and $l_{\text{obs}}(e_t, t)$ penalizes collisions with the moving obstacle.
The system is simulated for $300$ time steps under a disturbance sequence generated at each step with bounded amplitude and interpreted as an element of $\ell_2$ by zero extension beyond the simulation horizon. In other words, the disturbance is persistent over the considered horizon, but has finite energy on the full infinite sequence. Accordingly, in this experiment we set $p=2$, consistently with the certified $\ell_2$-gain of the REN controller class. We compare against (i) the offline controller of~\cite{furieriPerformance} and (ii) 
a receding-horizon open-loop (RHO) planner that solves at time $t$ the problem
\begin{align*}
    &\min_{u_{0:H-1}} \ \sum_{s=0}^{S} \tilde l(e_{0:H},u_{0:H}) \\
    \text{s.t.}&\quad e_{k+1}^s=f_e(e_k^s,u_k)+w_k^s,\ e^s_0=e_t, \ \forall s, \ \mathbf{w}^{s} \sim {\mathcal{D}},
\end{align*}
and applies the first input $u_t=u^*_0$, advances the state, and re-solves at $t{+}1$.
For a fair comparison, all approaches use the same horizon ($H=10$) and cost weights. Algorithm~1 attempts updates $\mathbfcal{M}$ every $t_{\mathrm{opt}}=1$ time step, using a budgeted, inexact solution (100 epochs of gradient descent warm-started from the previous parameters). This is admissible under Theorem~\ref{thm:update} because stability is guaranteed even without reaching local optimality. The right panel in Figure~\ref{fig:trackingCL} shows the closed-loop trajectory (in grey) and the predicted trajectories (in blue). 
Persistent disturbances make perfect tracking unachievable, and apparent overlaps between moving obstacles and past trajectory segments are only visualization artifacts. Over 50 runs, the proposed approach achieves lower cost than both RHO and the offline controller from~\cite{furieriPerformance}, while also exhibiting smaller variability; the corresponding boxplots are reported in Appendix~\ref{app:example}. Overall, the results indicate that online adaptation with closed-loop prediction improves both performance and reliability relative to the two baselines.

\section{Conclusions}
We introduced a framework for iteratively updating neural-network-based controllers while maintaining closed-loop stability via gain-based update conditions. This opens several research directions. First, robustness margins against model-plant mismatch could be embedded directly into the update logic, preserving stability under imperfect models. Second, online optimization could be used to update both the control operator and the system model used by the controller. Finally, connecting these ideas to reinforcement learning by enforcing our gain-based admissibility condition during policy improvement may offer stability guarantees during learning, not only after convergence.

\bibliographystyle{IEEEtran}
\bibliography{biblio}    

\newpage
\appendix
\subsection{Proof of Theorem~\ref{thm:update} and Input to State stability relation}
\label{app:thm1}
We first establish an upper bound for the state and 
input norms when applying $\mathbfcal{M}^{(0)}$ on the 
first interval $[t_0, t_1-1]$, where $t_0=0$.
From the definition of $\mathbfcal{F}$ in \eqref{eq:operatorForm} 
and the controller $\tilde{\mathbfcal{M}}$, the controller 
input is $\mathbf{z}^{(0)}=(x_0, w_1, \dots)$. From~\eqref{eq:nonlinsys}, one has $w_0=x_0$, so 
$\mathbf{z}^{(0)} = \mathbf{w}$. 
Let $\bar{x}_0 = x_{0:t_1-1}$ and $\bar{w}_0 = w_{0:t_1-1}$.
\begin{align}
    \| \mathbf{u}_{0:t_1-1} \| &\leq \gamma(\mathbfcal{M}^{(0)}) \|\bar{w}_0\| \label{eq:input_bound_app} \\
    \| \bar{x}_0 \| &\leq \gamma(\mathbfcal{F}) (\| \mathbf{u}_{0:t_1-1} \| + \| \bar{w}_0 \|) \label{eq:state_bound_int_app}
\end{align}
Substituting \eqref{eq:input_bound_app} into 
\eqref{eq:state_bound_int_app} yields:
\begin{equation} \label{eq:state_bound_app}
    \| \bar{x}_0 \| \leq \underbrace{\gamma(\mathbfcal{F}) (\gamma(\mathbfcal{M}^{(0)}) + 1)}_{C_0} \|\bar{w}_0\|.
\end{equation}
Now consider the evolution on a subsequent window $i \ge 1$ 
on the interval $[t_i, t_{i+1}-1]$. 
Let $\bar{x}_i = x_{t_i:t_{i+1}-1}$, $\bar{u}_i = u_{t_i:t_{i+1}-1}$, 
$\bar{w}_i = w_{t_i:t_{i+1}-1}$, and $\bar{z}_i = (x_{t_i}, w_{t_i+1}, 
\dots, w_{t_{i+1}-1})$. 
We obtain
\begin{equation} \label{eq:proof_iss_bound}
    \|\bar{x}_i\|_p \le\footnote{The system $\mathbfcal{F}$ on this window has a non-zero 
initial state $x_{t_i}$ and external inputs $\bar{u}_i$ and 
$\bar{w}_i$, and since as established by the $w_0=x_0$ convention, the 
unforced response to an initial state $x_{t_i}$ is 
equivalent to the response to a disturbance impulse 
$\mathbf{w}' = (x_{t_i}, 0, \dots)$.} \gamma_p(\mathbfcal{F}) |x_{t_i}| + \gamma_p(\mathbfcal{F}) (\|\bar{u}_i\|_p + \|\bar{w}_i\|_p)
\end{equation}
The controller's output $\bar{u}_i$ is bounded by its gain with zero initial condition:
\begin{equation}\label{eq:proof_u_bound}
    \| \bar{u}_i \|_p \le \gamma_p(\mathbfcal{M}^{(i)})\,\|\bar{z}_i\|_p.
\end{equation}
The input $\bar{z}_i$ is bounded using the triangle inequality:
\begin{equation}\label{eq:proof_z_bound}
    \|\bar{z}_i\|_p \le |x_{t_i}| + \|w_{t_i+1:t_{i+1}-1}\|_p \le |x_{t_i}| + \|\bar{w}_i\|_p
\end{equation}
Substitute \eqref{eq:proof_u_bound} and \eqref{eq:proof_z_bound} 
into \eqref{eq:proof_iss_bound} and grouping terms by $|x_{t_i}|$ and $\|\bar{w}_i\|_p$ yields:
\begin{align*}
    \|\bar{x}_i\|_p &
    \leq \gamma_p(\mathbfcal{F})(\gamma_p(\mathbfcal{M}^{(i)})+1) |x_{t_i}| \\
    &\quad + \gamma_p(\mathbfcal{F})(\gamma_p(\mathbfcal{M}^{(i)})+1) \|\bar{w}_i\|_p
\end{align*}
Now, apply the triggering condition \eqref{eq:switching_condition}, 
$|x_{t_i}| \le \epsilon^{(i)}$, and the gain condition \eqref{eq:bound_M_eps}, 
$\gamma_p(\mathbfcal{F})(\gamma_p(\mathbfcal{M}^{(i)})+1) \epsilon^{(i)} \le r^{(i)}$:
\begin{align} \label{eq:upperboundSequence}
    \| \bar{x}_i \|_p 
    &\leq r^{(i)} + C \|\bar{w}_i\|_p,
\end{align}
where $C = \gamma_p(\mathbfcal{F}) (\bar{\gamma} + 1)$ (using 
$\bar{\gamma}$ as an upper bound).
We now analyze the total norm of $\mathbf{x} = (\bar{x}_0, \bar{x}_1, \dots)$.\\
\textbf{Case: $1 \leq p < \infty$}.
Summing the $p$-th power of~\eqref{eq:upperboundSequence} over all windows $i$:
\begin{align} \label{eq:sumOverWindow}
    \| {\mathbf{x}} \|_p^p &= \| \bar{x}_0 \|_p^p + \sum_{i=1}^{\infty} \| \bar{x}_{i} \|_p^p \\
    &\le (C_0 \|\bar{w}_0\|_p)^p + \sum_{i=1}^{\infty} \left( r^{(i)} + C \| \bar{w}_i \|_p \right)^p \notag
\end{align}
Let $\tilde{\mathbf{w}}=(\|\bar{w}_1\|_p,\| \bar{w}_2\|_p,\ldots)$. The sum in~\eqref{eq:sumOverWindow} is the 
$p$-norm of a sum of sequences in $\ell_p$:
\begin{align*}
    \| {\mathbf{x}} \|_p^p &\leq (C_0 \|\bar{w}_0\|_p)^p + \left( \|\mathbf{r} + C\tilde{\mathbf{w}}\|_p \right)^p \\
    &\leq (C_0 \|\bar{w}_0\|_p)^p + \left( \|\mathbf{r}\|_p + C\|\tilde{\mathbf{w}}\|_p \right)^p
\end{align*}
Using $(a+b)^p \leq 2^{p-1}(a^p+b^p)$, which holds for $p\in\mathbb{N}_{\geq1}$ on the second term:
    $\| \mathbf{x} \|_p^p  \leq (C_0 \|\bar{w}_0\|_p)^p + 2^{p-1} (\|\mathbf{r}\|_p^p + C^p\|\tilde{\mathbf{w}}\|_p^p).$
The windows $\bar{w}_i$ are disjoint. Let $\mathbf{w}' = 
(\|\bar{w}_0\|, \tilde{\mathbf{w}})$. 
Then $\|\mathbf{w}'\|_p^p = \sum_{i=0}^\infty \|\bar{w}_i\|_p^p = \|\mathbf{w}\|_p^p$. 
The total norm is bounded by:
\begin{align*}
    \| \mathbf{x} \|_p^p \le 2^{p-1} \left( (C_0 \|\bar{w}_0\|_p)^p + \|\mathbf{r}\|_p^p + C^p\|\tilde{\mathbf{w}}\|_p^p \right) \\
    \le 2^{p-1} \left( \max(C_0, C)^p \|\mathbf{w}\|_p^p + \|\mathbf{r}\|_p^p \right)
\end{align*}
Since $ \mathbf{r} , {\mathbf{w}} \in \ell_p $, the right-hand side is 
finite, so we conclude that $\| {\mathbf{x}} \|_p < \infty$. 
The same arguments applied to \eqref{eq:proof_u_bound} 
show $\|\bm{u}\|_p<\infty$. Thus, the system is $\ell_p$-stable.

\textbf{Case: $ p = \infty $}.
The bound \eqref{eq:upperboundSequence} (which also holds for $p=\infty$) becomes: $\|\bar{x}_i\|_\infty \leq r^{(i)} + C \| \bar{w}_i \|_\infty.$\\
Taking the supremum over all $i\in\mathbb{N}_{\geq 1}$:
\begin{align*}
    \sup_{i\in\mathbb{N}_{\geq 1}} \|\bar{x}_i\|_\infty &\le \sup_{i\in\mathbb{N}_{\geq 1}} r^{(i)} + \sup_{i\in\mathbb{N}_{\geq 1}} (C \|\bar{w}_i\|_\infty) 
\end{align*}
The total state norm is $\|\mathbf{{x}} \|_\infty = \max( \|\bar{x}_0\|_\infty, 
\sup_{i\in\mathbb{N}_{\geq 1}} \|\bar{x}_i\|_\infty )$.
From \eqref{eq:state_bound_app}, $\|\bar{x}_0\|_\infty \le C_0 \|\bar{w}_0\|_\infty$, and hence
\begin{equation} \label{eq:boundinf}
    \|\mathbf{{x}} \|_\infty \le \max(C_0 \|\bar{w}_0\|_\infty, \|\mathbf{r}\|_\infty + C \sup_{i\in\mathbb{N}_{\geq 1}} \|\bar{w}_i\|_\infty)
\end{equation}
Since $\sup_{i\in\mathbb{N}_{\geq 0}} \|\bar{w}_i\|_\infty \le \|\mathbf{w}\|_\infty$, and $\mathbf{r}, 
\mathbf{w} \in \ell_\infty$, the right hand side of~\eqref{eq:boundinf} is finite. Thus, 
the system is $\ell_\infty$-stable.\\
Therefore, for any $p\in[1,\infty]$ and any disturbance sequence $\mathbf w\in\ell_p$, the corresponding closed-loop trajectories satisfy $\mathbf x\in\ell_p$ and $\mathbf u\in\ell_p$. Hence, the closed loop induced by the concatenated controller $\tilde{\mathbfcal M}$ is $\ell_p$-stable.
\hfill$\blacksquare$

Next, we discuss how imposing additional conditions on $\bm r$ can yield an input-to-state type bound for the resulting closed loop, on top of $\ell_p$-stability.

\begin{corollary}[ISS under time-scheduled persistent updates]\label{cor:ISS}
Consider Algorithm~1 with $p=\infty$ and fixed update times $t_i=i\,t_{\mathrm{opt}}$, $i\in\mathbb N$, and assume that the conditions of Theorem~\ref{thm:update} hold at every scheduled update.
Suppose that
\[
r^{(i)} = d^{(i)}\,\gamma(\mathbfcal{F})\bigl(\gamma(\mathbfcal{M}^{(0)})+1\bigr)|x_0|
+ g \|\mathbf{w}\|_\infty,
\qquad i\ge 1,
\]
for some positive, strictly decreasing scalar sequence $\{d^{(i)}\}_{i\ge 0}$ such that $d^{(0)}\ge 1$ and $\lim_{i\to\infty} d^{(i)}=0$, and for some constant $g>0$. Then the resulting closed loop is Input-to-State Stable (ISS)~\cite{jiang2001input}; that is, there exist functions $\beta\in\mathcal{KL}$ and $\alpha\in\mathcal{K}_\infty$ such that, for any initial condition $x_0$ and any bounded disturbance sequence $\mathbf w\in\ell_\infty$, the state trajectory satisfies
\[
|x_t| \le \beta(|x_0|,t)+\alpha(\|\mathbf w\|_\infty), \qquad \forall t\in\mathbb N .
\]
\end{corollary}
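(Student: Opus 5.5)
The plan is to reuse the window-wise bounds from the proof of Theorem~\ref{thm:update} in the case $p=\infty$, and then to assemble them into a $\mathcal{KL}$ plus $\mathcal{K}_\infty$ estimate.

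On the first window $[0,t_1-1]$, bound \eqref{eq:state_bound_app} gives, for all $t<t_{\mathrm{opt}}$,
\[
|x_t|\le C_0\max(|x_0|,\|\mathbf w\|_\infty),
\]
because $\bar w_0=(x_0,w_1,\dots)$. Here $C_0=\gamma(\mathbfcal F)(\gamma(\mathbfcal M^{(0)})+1)$. We then split this with $\max(a,b)\le a+b$.

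On a window $i\ge1$, bound \eqref{eq:upperboundSequence} with $p=\infty$ gives
\[
\|\bar x_i\|_\infty\le r^{(i)}+C\|\mathbf w\|_\infty .
\]
Substituting the prescribed $r^{(i)}$ yields, for $t\in[t_i,t_{i+1})$,
\[
|x_t|\le d^{(i)}C_0|x_0|+(g+C)\|\mathbf w\|_\infty .
\]
The $\|\mathbf w\|_\infty$ contribution is already of the right form. We define
\[
\alpha(s):=(g+C+C_0)\,s,
\]
which is linear with a positive slope, hence class $\mathcal K_\infty$.

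The remaining task is to turn the window-indexed coefficient $d^{(i)}C_0|x_0|$ into a $\mathcal{KL}$ function of $(|x_0|,t)$. Since the schedule is fixed, $t\in[i\,t_{\mathrm{opt}},(i+1)t_{\mathrm{opt}})$ means $i=\lfloor t/t_{\mathrm{opt}}\rfloor$. Set $d^{(0)}$ to be the given value, which is at least $1$, so that the first-window bound $C_0|x_0|\le d^{(0)}C_0|x_0|$ fits the same formula. We would then take
\[
\tilde\beta(s,t)=C_0\,d^{(\lfloor t/t_{\mathrm{opt}}\rfloor)}\,s .
\]

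The main obstacle is that $\tilde\beta$ is only piecewise constant and nonincreasing in $t$, whereas a $\mathcal{KL}$ function must be strictly decreasing. We handle this by replacing the step function $i\mapsto d^{(i)}$ with the piecewise-linear interpolation $\delta(\tau)$ through the points $(i+1,d^{(i)})$, with $\delta(\tau)=d^{(0)}$ for $\tau\le1$. This $\delta$ is continuous and strictly decreasing on $[1,\infty)$, tends to $0$, and dominates $d^{(\lfloor\tau\rfloor)}$. To get strict decrease on all of $[0,\infty)$ and keep the upper bound valid, we instead use $\hat\delta(\tau)=\delta(\tau)+e^{-\tau}$, or shift the interpolation so that it passes through $(i,d^{(i-1)})$, using $d^{(i-1)}\ge d^{(i)}$. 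We then set
\[
\beta(s,t)=C_0\,\hat\delta(t/t_{\mathrm{opt}})\,s .
\]
This is continuous, class $\mathcal K$ in $s$ because it is linear with a positive coefficient, and decreasing to zero in $t$, so $\beta\in\mathcal{KL}$.

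Combining the two pieces gives
\[
|x_t|\le\beta(|x_0|,t)+\alpha(\|\mathbf w\|_\infty)
\]
for all $t$. One point needs care. Theorem~\ref{thm:update} requires $\mathbf r\in\ell_p$, and for $p=\infty$ this holds because $r^{(i)}$ is bounded. The fallback case of Algorithm~1, in which the previous controller is kept, must also be checked: there $x_{t_i}$ is not restarted, so I would argue that the ``conditions hold at every scheduled update'' hypothesis excludes this case.
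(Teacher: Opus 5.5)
Your proposal is correct and follows essentially the same route as the paper. Both use the window-wise $p=\infty$ bounds from the proof of Theorem~\ref{thm:update}, with $d^{(0)}\ge 1$ covering the first window, a linear $\alpha$, and the same piecewise-linear majorant of $d^{(\lfloor t/t_{\mathrm{opt}}\rfloor)}$ through the nodes $(t_{i+1},d^{(i)})$, which is the paper's $\bar d$; the fallback case is excluded by hypothesis, just as you argue. The only difference is your extra $e^{-\tau}$ term to force strict decrease, which is harmless but not needed, since the paper only requires $\beta(s,\cdot)$ to be nonincreasing. Your alternative ``shift'' through $(i,d^{(i-1)})$ is the same node set, so it would not by itself give strict decrease on the first window.
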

\begin{proof}
Let
\[
D:=\gamma(\mathbfcal{F})\bigl(\gamma(\mathbfcal{M}^{(0)})+1\bigr),
\qquad
G:=\gamma(\mathbfcal{F})(\bar\gamma+1).
\]
From the proof of Theorem~\ref{thm:update}, if $t\in[t_i,t_{i+1})$ with $i\ge 1$, then
\[
|x_t|
\le
r^{(i)} + G\|\mathbf w\|_\infty
=
D\,d^{(i)}|x_0| + (g+G)\|\mathbf w\|_\infty .
\]
For $t\in[t_0,t_1)$, \eqref{eq:state_bound_app} gives
\[
|x_t|
\le
D |x_0| + G\|\mathbf w\|_\infty
\le
D\,d^{(0)}|x_0| + (g+G)\|\mathbf w\|_\infty,
\]
since $d^{(0)}\ge 1$ and $g>0$.

Now define a continuous, nonincreasing function $\bar d:[0,\infty)\to(0,\infty)$ as follows: set $\bar d(t)=d^{(0)}$ for $t\in[0,t_1]$, and for each $i\ge 1$ let $\bar d$ be linear on $[t_i,t_{i+1}]$ with
\[
\bar d(t_i)=d^{(i-1)}, \qquad \bar d(t_{i+1})=d^{(i)}.
\]
Since $\{d^{(i)}\}$ is strictly decreasing and $t_i\to\infty$, the function $\bar d$ is continuous, nonincreasing, and satisfies $\bar d(t)\to 0$ as $t\to\infty$. Moreover, for every $t\in[t_i,t_{i+1})$ one has $\bar d(t)\ge d^{(i)}$.

Therefore, for all $t\ge 0$,
\[
|x_t|
\le
D\,\bar d(t)\,|x_0| + (g+G)\|\mathbf w\|_\infty.
\]
Define
\[
\beta(s,t):=D\,\bar d(t)\,s,
\qquad
\alpha(s):=(g+G)s.
\]
Then $\alpha\in\mathcal K_\infty$. Also, for each fixed $t\ge0$, $\beta(\cdot,t)\in\mathcal K$, and for each fixed $s\ge0$, $\beta(s,\cdot)$ is continuous, nonincreasing, and converges to $0$ as $t\to\infty$. Hence $\beta\in\mathcal{KL}$, and the claimed ISS estimate follows.
\end{proof}

\subsection{Implementation details}
\label{app:implementation}
\subsubsection{Design of the sequence $\mathbf{r}$}

The sequence $\mathbf{r} = \{r^{(i)}\}_{i\ge 1}$ regulates how aggressive updates are allowed to be over time.
The design of $\mathbf{r}$ can leverage \textit{a priori} information obtained through offline closed-loop simulations. Let $\tilde{x}_{0:T}$ be a nominal state trajectory generated under a representative disturbance and a baseline controller $\mathbfcal{M}^{(0)}$. We define a time-indexed budget profile
\[
0 \le r_t = \rho \,\|\tilde{x}_t\|, \quad t=0,\dots,T,
\]
with $\rho>0$ a scaling factor. To obtain an infinite-horizon sequence compatible with Theorem~\ref{thm:update}, the profile is extended for $t>T$ by a nonnegative tail $\{r_t\}_{t>T}$ chosen so that $\{r_t\}_{t\ge0}\in \ell_p$. For instance, when $1\le p<\infty$, one may append any summable tail, e.g. $
r_t = r_T \eta^{\,t-T}$, $t>T,$ for some $\eta\in(0,1)$, while for $p=\infty$ any bounded continuation is admissible (and a decaying continuation can be used if one also wants the update budget to vanish asymptotically). Online, at each actual update time $t_i$, we set $r^{(i)} := r_{t_i}$.

In this way, large nominal deviations $\|\tilde{x}_t\|$ early in the system evolution yield larger $r^{(i)}$, allowing updates even for relatively large states (or higher controller gains), while smaller $\|\tilde{x}_t\|$ later in the system evolution yield smaller $r^{(i)}$, enforcing more conservative updates near the origin. This provides a simple, simulation-driven procedure to generate a sequence $\mathbf{r}$ consistent with Theorem~\ref{thm:update}.

\subsubsection{Controller Updates through optimization}

The method for updating the controller depends significantly on the chosen update strategy and the available computational resources. Indeed, solving problem~\eqref{eq:unconstrNOCRH} can be computationally demanding. 
A practical way to reduce computational burden is to solve a simplified instance of \eqref{eq:unconstrNOCRH} (shorter horizon
$H$, fewer disturbance samples). Equivalently, one may warm-start from the previously found parameters $\theta$ and take a budgeted number of gradient steps during the sampling interval; the resulting inexact iterate can be deployed without risking destabilization.
Indeed, \emph{the proposed method ensures stability without requiring optimality} and without requiring specific assumptions on the cost function, except that it must be differentiable (see Problem~\ref{prb:perfBoosting}).

\subsection{Numerical Example}
\label{app:example}
Figures~\ref{fig:boxplotMountain} and~\ref{fig:boxplotTimevarying} report the boxplots of the total closed-loop cost for the mountains and dynamic-obstacles scenarios, respectively.
\begin{figure}[t]  
        \centering        \includegraphics[width=\columnwidth]{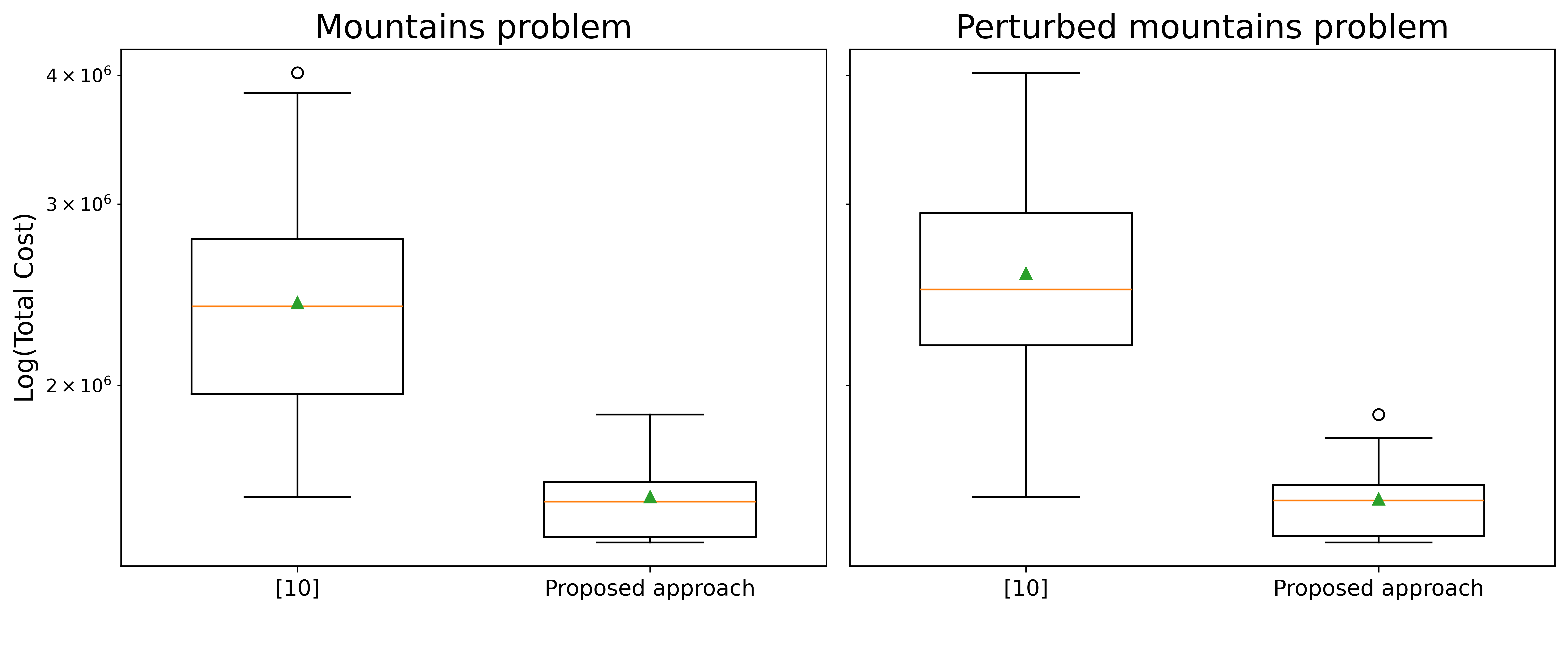}
 \caption{Mountains problem - Total cost (log scale) over 50 runs: nominal scenario (center) and perturbed scenario with impulse disturbances~\eqref{eq:delta} (right).}
    \label{fig:boxplotMountain}
\end{figure}
\begin{figure}[t]  
        \centering        \includegraphics[width=\columnwidth]{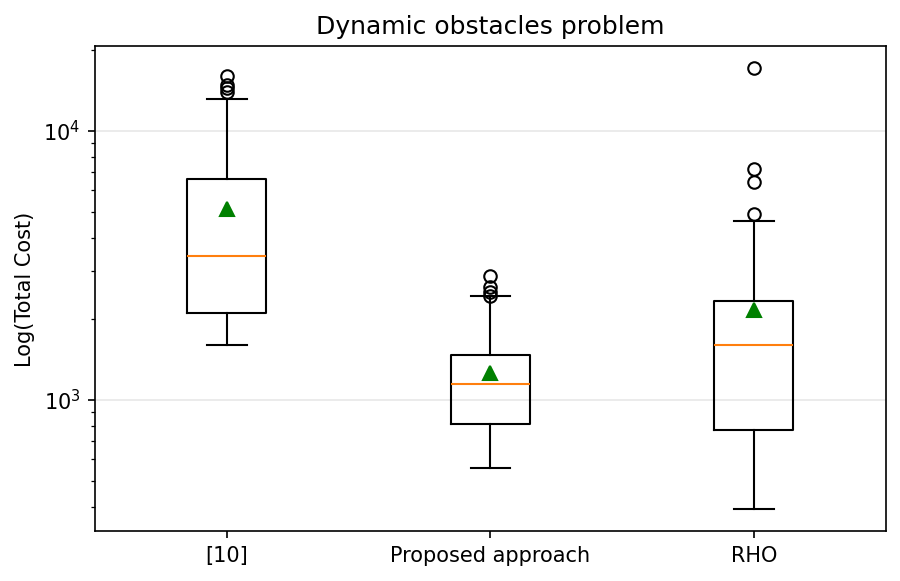}
 \caption{Dynamic obstacles problem -  Total cost (log scale) over 50 runs for the offline controller in~\cite{furieriPerformance}, a receding-horizon open-loop (RHO) planner, and the proposed approach.}
    \label{fig:boxplotTimevarying}
\end{figure}
\subsubsection{Computation of the upper bound of the $\ell_2$-gain}
\label{app:ell2gain}
For the pre-stabilized error dynamics, we consider the least favorable slope of the friction term
\[
G(v)=-b_1 v + b_2\tanh(v), 
\qquad 
\sigma_{\mathrm{wc}}:=-(b_1-b_2),
\]
which corresponds to the minimum-damping case since
\[
-b_1 \le \frac{\partial G}{\partial v} \le -(b_1-b_2).
\]
Using this worst-case slope, we obtain a linear error model of the form
\[
e_{t+1}=A_{\mathrm{wc}}e_t+B_u u_t+w_{t+1}.
\]
Let $\bar B:=[\,B_u\ \ I\,]$ and $\xi_t:=[\,u_t^\top\ \ w_{t+1}^\top\,]^\top$. 
We compute an upper bound on the $\ell_2$-gain of $\mathbfcal F_e:(\mathbf u,\mathbf w)\mapsto \mathbf e$ by finding $P\succ 0$, $\eta>0$, and $\rho>0$ such that
\begin{equation}\label{eq:lmi_gammaF}
\begin{bmatrix}
A_{\mathrm{wc}}^\top P A_{\mathrm{wc}}-P+\eta I & A_{\mathrm{wc}}^\top P \bar B\\
\bar B^\top P A_{\mathrm{wc}} & \bar B^\top P \bar B-\rho I
\end{bmatrix}\preceq 0.
\end{equation}
Then, with $V(e)=e^\top P e$, \eqref{eq:lmi_gammaF} implies
\[
V(e_{t+1})-V(e_t)\le -\eta \|e_t\|^2+\rho\bigl(\|u_t\|^2+\|w_{t+1}\|^2\bigr).
\]
Summing over time and using the convention $w_0=e_0$ yields
\[
\eta \|\mathbf e\|_2^2
\le
\lambda_{\max}(P)\|w_0\|^2+\rho\|\mathbf u\|_2^2+\rho\|w_{1:\infty}\|_2^2
\le
c\bigl(\|\mathbf w\|_2^2+\|\mathbf u\|_2^2\bigr),
\]
where $c:=\max\{\lambda_{\max}(P),\rho\}$. Therefore,
\[
\|\mathbf e\|_2
\le
\hat\gamma(\mathbfcal F_e)\bigl(\|\mathbf w\|_2+\|\mathbf u\|_2\bigr),
\qquad
\hat\gamma(\mathbfcal F_e):=
\max\!\left\{1,\sqrt{\frac{c}{\eta}}\right\}.
\]

\end{document}